\numberwithin{equation}{section}
\newcounter{thMM}
\newcounter{leMM}
\newcounter{deFF}
\newcounter{exMP}
\newcounter{prOP}
\newcounter{coRR}
\newcounter{coNS}
\newenvironment{theorem}[1][Theorem]{\refstepcounter{thMM}\trivlist
   \item[\hskip19pt{\bf #1~\arabic{thMM}.}]\it\hskip3pt}{\endtrivlist}
\newenvironment{lemma}[1][Lemma]{\refstepcounter{leMM}\trivlist
   \item[\hskip19pt{\bf #1~\arabic{leMM}.}]\it\hskip3pt}{\endtrivlist}
\newenvironment{definition}[1][Definition]{\refstepcounter{deFF}\trivlist
   \item[\hskip19pt{\bf #1~\arabic{deFF}.}]\rm\hskip3pt}{\endtrivlist}
\newenvironment{proposition}[1][Proposition]{\refstepcounter{prOP}\trivlist
   \item[\hskip19pt{\bf #1~\arabic{prOP}.}]\it\hskip3pt}{\endtrivlist}
\newenvironment{corollary}[1][Corollary]{\refstepcounter{coRR}\trivlist
   \item[\hskip19pt{\bf #1~\arabic{coRR}.}]\it\hskip3pt}{\endtrivlist}
\newenvironment{construction}[1][Construction]{\refstepcounter{coNS}\trivlist
   \item[\hskip19pt{\bf #1~\arabic{coNS}.}]\rm\hskip3pt}{\endtrivlist}
\newenvironment{proof}[1][Proof]{\begin{trivlist}
\item[\hskip \labelsep {\bfseries #1}] }{ \vspace{-15pt}\begin{flushright} $\square$\end{flushright}\end{trivlist}}
\newenvironment{remark}[1][Remark]{\begin{trivlist}
\item[\hskip \labelsep {\bfseries #1}]\hskip3pt}{\end{trivlist}}
\newenvironment{example}[1][Example]{\begin{trivlist}
\item[\hskip \labelsep {\bfseries #1}]}{\end{trivlist}}
\newenvironment{statement}[1][Statement]{\begin{trivlist}
\item[\hskip \labelsep {\bfseries #1}]\it  }{ \end{trivlist}}
\newcommand{\Id}{\mathbbmss{1}}
\newcommand{\InHom}{\mbox{$\underline{\Hom}$}}
\newcommand{\kAbove}[2]{\overset{\left. \scriptscriptstyle #1 \hspace{4pt}\right. }{\vphantom{a}\smash{#2}}}
\newcommand{\InDiff}{\mbox{$\underline{\Diff}$}}
\DeclareMathOperator{\Vect}{Vect}
\DeclareMathOperator{\Diff}{Diff}
\DeclareMathOperator{\Hom}{Hom}
\DeclareMathOperator{\Spec}{Spec}
\newcommand{\catname}[1]{\textnormal{\texttt{#1}}}
\begin{document}
\bibliographystyle{plain}

\author{Andrew James Bruce\\ \small \emph{Institute of Mathematics, Polish Academy of Sciences}\\ \small email:\texttt{andrewjamesbruce@googlemail.com}}

\date{\today}
\title{On curves and jets of curves on supermanifolds}
\maketitle

\begin{abstract}
In this  note we examine a natural concept of a curve on a supermanifold and the subsequent notion of the jet of a curve. We then tackle the question of geometrically defining the higher order tangent bundles of a supermanifold. Finally we make a quick comparison with the notion of a curve presented here are other common notions found in the literature.
\end{abstract}

\begin{small}
\noindent \textbf{MSC (2010)}: 58A20, 	58A32,  58A50.\\
\noindent \textbf{Keywords}: supermanifolds, curves,  jets, higher order tangent bundles.
\end{small}
\section{Introduction}

Supermanifolds are a generalisation of the notion of a smooth manifold in which the structure sheaf of the manifold gets replaced by a sheaf of supercommutative algebras. Informally one can think of a supermanifold as a ``manifold" with both commuting and anticommuting coordinates. Initial interest in supermanifolds came from developments in theoretical physics where anticommuting variables were found to be essential in quasi-classical descriptions of fermions and Faddeev--Popov ghosts. Today the theory of supergeometry had developed into a rich area of pure mathematics in its own right. In this note we look at a geometric or kinematic definition of the higher order tangent bundle of a supermanifold.\\

Higher order tangent bundles, not to be confused with the related iterated tangent bundles, are the natural geometric home of higher derivative Lagrangian mechanics and thus a clear geometric understanding of the super-versions is important for general supermechanics. The Ostrogradski\u{\i} instability means that higher derivative Lagrangians cannot be viewed as fundamental theories, however they can serve as \emph{effective theories}.  As a side remark, there has been some renewed interest in higher derivative supersymmetric field theories in relation to supergravity effective actions from string theory \cite{VanProeyen:2013}.  Higher order tangent bundles are also fundamental to the notion of graded bundles and homogeneity structures as studied by Grabowski and Rotkiewicz \cite{Grabowski:2011}.\\

To the authors' knowledge the k-th order tangent bundle $T^{(k)}M$ of a supermanifold  $M$ was first described by  Cari\~{n}ena \& Figueroa \cite{Carinena:1998} via a ``diaginalisation" of the k-th iterated tangent bundle. This line of reasoning parallels the classical constructions, but the relation with curves and their jets is obscured. Another approach  to the k-th order tangent bundle is to consider them as Weil bundles, the classical case can be found in the monograph of Kol\'{a}r, Michor and Slovak \cite{Kolar:1993}. Essentially the k-th order tangent bundle can be viewed  as the supermanifold of all maps (both even and odd) from $\Spec(\mathbb{D}_{k})$  to the supermanifold $M$, where $\mathbb{D}_{k} = \mathbb{R}[\delta]/ \langle \delta^{k+1} \rangle$ and $\delta$ is a single even indeterminant. Details of Weil bundles on supermanifolds can be found in the work of Alldridge \cite{Alldridge:2011} and the unpublished notes of Rotkiewicz. This approach is elegant and has been extremely powerful in classical differential geometry. Note that odd maps are required in this algebraic approach and that this already complicates the situation as compared to the classical case.  \\

In this work we construct the k-th order tangent bundle  of a supermanifold in terms of superised versions of curves and their jets. To do this we employ Grothendieck's functor of points.  Loosely, the S-points of the k-th order tangent bundle are identified with the k-th jets of curves at the S-points of the supermanifold. We show that the operational handling of jets of curves on supermanifolds in terms of Taylor expansions in local coordinates can be put on  solid footing using the functor of points. To the authors knowledge this has not been properly presented in the literature before.\\

\subsection*{Preliminary notation}
For an overview of category theory we suggest Mac Lane's book \cite{MacLane:1988}. We will follow the ``Russian school" and denote by $\catname{SM}$ the category of real finite dimensional supermanifolds understood as locally superringed spaces, see for example  \cite{Carmeli:2011,Deligne:1999,Manin1997,Varadarajan2004}.  We will simply denote a supermanifold by $M$, where we understand it to be defined by its structure sheaf $(|M|, \mathcal{O}_{M})$, where $|M|$ is the reduce manifold or body of $M$. By an open superdomain $U \subset M$, we mean an open neighborhood of some point $p \in |M|$ together with the corresponding restriction of the structure sheaf.    Sections of the structure sheaf will be called functions on $M$ and will be denoted by $C^{\infty}(M)$. A morphism between supermanifolds $\phi : M \rightarrow N$ is a pair of morphisms $(|\phi|, \phi^{*})$ where $|\phi|: |M| \rightarrow |N|$ is a continuous map and $\phi^{*}: \mathcal{O}_{N} \rightarrow \mathcal{O}_{M}$ is a sheaf morphism above $|\phi|$.  The set of morphisms between  the pair of supermanifolds will be denoted by  $\Hom_{\catname{SM}}(M,N) := \Hom(M,N)$. Note that these categorical morphisms necessarily preserve the $\mathbb{Z}_{2}$-grading.  We  assume the reader has some familiarity with  the theory of supermanifolds.\\

\noindent \textbf{The functor of points:}
We will employ the Grothendieck's functor of points applied to supergeometry throughout this work, see for example \cite{Carmeli:2011,Deligne:1999}. The S-points of a supermanifold $M$ are elements in the set $\Hom(S,M)$, where $S$ is some arbitrary supermanifold. That is, one can view a supermanifold as a functor
 \begin{eqnarray}
 \nonumber M : \catname{SM}^{\textnormal{o}} &\rightarrow& \catname{Set} \\
  \nonumber S &\mapsto& \Hom(S, M) := M(S),
 \end{eqnarray}
 which is an example of the Yoneda embedding. Via Yoneda's lemma, we can identify a supermanifold with such a functor and morphisms between supermanifolds are equivalent to natural transformations between the corresponding functors. Such natural transformations amount to maps between the respective sets of S-points.   Informally one can think about the ``points" of $M$ as being parameterised by \emph{all} supermanifold $S$.

\begin{remark}
 Following the work of   Schwarz  and Voronov \cite{Shvarts:1984,Voronov:1984} it is known that it is actually sufficient  to consider just $\Lambda$-points, that is  supermanifolds of the form $\mathbb{R}^{0|q}$ $(q \geq 1)$ as paramaterisations.
\end{remark}

One can think about the evaluation of a given function at an S-point, which is the analogue of the evaluation of a function on a manifold at a point. First note that we have the bijection between $C^{\infty}(M)$ and $\Hom(M,\mathbb{R}^{1|1})$ simply given by $f \mapsto ( t \circ f , \tau \circ f)$, where we have local coordinates $(t, \tau)$ on $\mathbb{R}^{1|1}$. Then we define the value of the function $f$ at a specified  S-point $m \in M(S)$ as  $f_{m} := f \circ m \in \Hom(S, \mathbb{R}^{1|1}) \simeq C^{\infty}(S)$.  \\

As the functor of points involves maps between finite dimensional supermanifolds one can consider S-points locally via coordinates. In particular, let us employ some coordinate system $(x^{A}) =(x^{a}, \theta^{i})$ on $U \subset M$, then the S-points are then specified by systems of functions $(x_{S}^{a}, \theta_{S}^{i})$ where
 $x_{S}^{a} \in C^{\infty}(S)_{0}$ and $\theta_{S}^{i} \in C^{\infty}(S)_{1}$. As  the supermanifold $S$ is chosen arbitrarily dependence on the local coordinates of $S$ will not explicitly be presented.  Given a morphism  $\psi \in \Hom(P,S)$  between supermanifolds $P$ and $S$ we have an induced map
 \begin{eqnarray}
 \nonumber \bar{\psi} : M(S) &\rightarrow& M(P)\\
  \nonumber m &\rightarrow& m\circ \psi,
\end{eqnarray}
where $m \in M(S)$.\\

\noindent \textbf{Generalised supermanifolds and the internal homs:}
A \emph{generalised supermanifold} is an object in the functor category $ \widehat{\catname{SM}} := \catname{Fun}(\catname{SM}^{o} , \catname{Set})$, whose objects are functors from $\catname{SM}^{o}$ to the category $\catname{Set}$ and whose morphisms are natural transformations. Note that this functor category contains $\catname{SM}$ as a full subcategory via the Yoneda embedding. One passes from the category of finite dimensional supermanifolds to the larger category of generalised supermanifolds in order to understand the internal hom objects. In particular there always exists a generalised supermanifold such that the so called adjunction formula holds

\begin{equation*}
\InHom(M,N)(\bullet) :=  \Hom(\bullet \times M,N) \in \widehat{\catname{SM}}.
\end{equation*}

By some abuse of language, the mapping supermanifold $\InHom(M,N)$ is referred to as an \emph{internal hom object}, remembering that it lives in the larger category of generalised supermanifolds.  Heuristically, one should think of enriching the morphisms between supermanifolds to now have the structure of a supermanifold, however to understand this one passes to a larger category.   In essence we will use the above to define what we mean by a mapping supermanifold and will probe it using the functor of points. A generalised supermanifold is \emph{representable} if it is naturally isomorphic to a supermanifold in the image of the Yoneda embedding. For example, it is easy to see that $\Hom(\{pt\} , M ) = |M|$ while $\InHom(\{pt\},M)=M$. Another well-know example of a representable generalised supermanifold  is the antitangent bundle $\InHom(\mathbb{R}^{0|1}, M) = \Pi TM$.\\

\section{Superfunctions on $\mathbb{R}$ and their jets}\label{sec:superfunctions}

Smooth functions $\mathbb{R} \rightarrow \mathbb{R}$ play a fundamental role in the notion of classical jets on manifolds. The smooth maps $\mathbb{R} \rightarrow \mathbb{R}^{1|1}$, where we allow both even and odd maps, \emph{superfields} in the physics language,  play the analogue role in supergeometry. \\

The mapping supermanifold in question here is $ \InHom(\mathbb{R}, \mathbb{R}^{1|1})$. As this is not a set one has to take a little care with defining its ``elements". To do this  we ``probe" the mapping supermanifold  via the functor of points. That is it will be useful to consider the set

\begin{equation}
 \InHom(\mathbb{R}, \mathbb{R}^{1|1})(S) := \Hom(S \times \mathbb{R}, \mathbb{R}^{1|1}),
\end{equation}

for an arbitrary supermanifold $S$.

\begin{definition}\label{def:superfunction}
A \emph{superfunction on} $\mathbb{R}$ \emph{paramaterised by} $S \in \catname{SM}$ is a smooth map $\Upsilon_{S} \in \InHom(\mathbb{R}, \mathbb{R}^{1|1})(S)$.
\end{definition}

\begin{remark}
By convention a  function on a supermanifold is  a section of the structure sheaf, that is a smooth parity preserving map from the supermanifold to $\mathbb{R}^{1|1}$. Thus we can identify $C^{\infty}(M) = \Hom(M, \mathbb{R}^{1|1})$. No confusion between this notion and a superfunction on the real line should occur.
\end{remark}

\begin{construction}\label{con:morphism}
Let $\psi \in \Hom(P,S)$ be a morphism of supermanifolds. Then such a morphism induces a map
\begin{eqnarray}
\nonumber \Psi : \InHom(\mathbb{R}, \mathbb{R}^{1|1})(S) &\rightarrow& \InHom(\mathbb{R}, \mathbb{R}^{1|1})(P)\\
\Upsilon_{S} &\mapsto& \Upsilon_{P} :=  \Upsilon_{S} \circ (\psi \times \Id_{\mathbb{R}} ).
\end{eqnarray}
\end{construction}

Note that a superfunction on $\mathbb{R}$ parameterised by $S$ is nothing but a function on the supermanifold $S \times \mathbb{R}$. Thus we have a well defined notion of taking the derivative with respect to ``time" by picking a global coordinate on $\mathbb{R}$. In particular we can Taylor expand any superfunction with respect to any point $t_{0} \in \mathbb{R}$.

\begin{construction}\label{con:jet}
Let $\Upsilon_{S} \in \InHom(\mathbb{R}, \mathbb{R}^{1|1})(S)$ be a superfunction on $\mathbb{R}$ parameterised by $S$. The k-th jet of $\Upsilon_{S}$ at the point $t = t_{0} \in \mathbb{R}$ is the polynomial with coefficients in $C^{\infty}(S)$
\begin{equation*}
\left(J_{t_{0}}^{k}\Upsilon_{S} \right)z = \left.\Upsilon_{S}\right|_{t_{0}} + z \left.\frac{\partial \Upsilon_{S}}{\partial t}\right|_{t_{0}} + z^{2}\frac{1}{2!}\left.\frac{\partial^{2} \Upsilon_{S}}{\partial t^{2}}\right|_{t_{0}} + \cdots + z^{k}\frac{1}{k!}\left.\frac{\partial^{k} \Upsilon_{S}}{\partial t^{k}}\right|_{t_{0}},
\end{equation*}
for any natural $k$. Here $z$ is understood as an even indeterminant.
\end{construction}

\begin{definition}\label{def:equivalence}
 Let $\Upsilon_{S}, \bar{\Upsilon}_{S} \in \InHom(\mathbb{R}, \mathbb{R}^{1|1})(S)$ be  superfunctions on $\mathbb{R}$ parameterised by $S$. Then $\Upsilon_{S}$ and $\bar{\Upsilon}_{S}$ are said to be\emph{ equivalent to order} $r$ \emph{at the point} $t_{0}\in \mathbb{R}$ if and only if
 \begin{equation*}
 \left(J_{t_{0}}^{r}\Upsilon_{S}  \right) = \left(J_{t_{0}}^{r}\bar{\Upsilon}_{S} \right).
 \end{equation*}
 Clearly this is an equivalence relation on the set $\InHom(\mathbb{R}, \mathbb{R}^{1|1})(S)$.
\end{definition}

Essentially the above definition says that two parameterised superfunctions are equivalent to order $r$ at $t_{0}$ if they can be identified at this point and so can their first $r$ derivatives with respect to $t$.

\begin{lemma}\label{lem:naturality}
Let $\Upsilon_{S} \in \InHom(\mathbb{R}, \mathbb{R}^{1|1})(S)$ be a superfunction on $\mathbb{R}$. Consider an arbitrary homomorphism of supermanifolds $\psi \in \Hom(P,S)$.  Then we have
\begin{equation*}
J_{t_{0}}^{k}\Upsilon_{P} = \left(J_{t_{0}}^{k}\Upsilon_{S}  \right) \circ \psi
\end{equation*}

\end{lemma}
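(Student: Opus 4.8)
The plan is to reduce the stated identity to the single fact that the comorphism of $\psi \times \Id_{\mathbb{R}}$ commutes with $\partial/\partial t$, and then to reassemble the jet polynomial coefficientwise. First recall from Construction~\ref{con:morphism} that $\Upsilon_P = \Upsilon_S \circ (\psi \times \Id_{\mathbb{R}})$, viewed as a function on $P \times \mathbb{R}$ valued in $\mathbb{R}^{1|1}$. By Construction~\ref{con:jet} the coefficients of $(J^k_{t_0}\Upsilon_S)$ are the functions $\Upsilon_S^{(n)} := \tfrac{1}{n!}\,\partial_t^n \Upsilon_S|_{t_0} \in C^{\infty}(S)$ for $0 \le n \le k$, and composing the polynomial with $\psi$ means applying $\psi^{*}$ to each coefficient, i.e.\ sending $\Upsilon_S^{(n)} \mapsto \Upsilon_S^{(n)} \circ \psi \in C^{\infty}(P)$. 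Since two polynomials in the even indeterminate $z$ agree exactly when their coefficients agree, it suffices to prove, for each $n$,
\begin{equation*}
\frac{1}{n!}\left.\frac{\partial^{n} \Upsilon_P}{\partial t^{n}}\right|_{t_0} = \left(\frac{1}{n!}\left.\frac{\partial^{n} \Upsilon_S}{\partial t^{n}}\right|_{t_0}\right)\circ \psi .
\end{equation*}

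The core step is to show that differentiation in $t$ commutes with the comorphism $(\psi \times \Id_{\mathbb{R}})^{*}$ on $C^{\infty}(S \times \mathbb{R})$. I would argue this in local coordinates: choosing coordinates $(y^{B})$ on $S$ and the global coordinate $t$ on $\mathbb{R}$, the morphism $\psi \times \Id_{\mathbb{R}}$ fixes $t$ and sends each $y^{B}$ to a function $\psi^{*}y^{B} \in C^{\infty}(P)$ that does not depend on $t$. Hence for any $F \in C^{\infty}(S \times \mathbb{R})$ the chain rule, together with $\partial(\psi^{*}y^{B})/\partial t = 0$ and $\partial t/\partial t = 1$, gives $\partial_t\big((\psi \times \Id_{\mathbb{R}})^{*}F\big) = (\psi \times \Id_{\mathbb{R}})^{*}(\partial_t F)$. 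Because $\partial/\partial t$ is an even derivation and $(\psi \times \Id_{\mathbb{R}})^{*}$ an even algebra homomorphism, no sign corrections intervene, and iterating yields $\partial_t^{n}\Upsilon_P = (\psi \times \Id_{\mathbb{R}})^{*}(\partial_t^{n}\Upsilon_S)$ for all $n$.

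It then remains to evaluate at $t = t_0$. Restriction to the slice $\{t = t_0\}$ intertwines $\psi \times \Id_{\mathbb{R}}$ with $\psi$ itself, since under the identifications $P \times \{t_0\} \cong P$ and $S \times \{t_0\} \cong S$ the restricted map is precisely $\psi$. Combining this with the previous identity produces $\partial_t^{n}\Upsilon_P|_{t_0} = (\partial_t^{n}\Upsilon_S|_{t_0})\circ \psi$, which is the required coefficientwise equality. Dividing by $n!$, multiplying by $z^{n}$ and summing over $0 \le n \le k$ reconstitutes the two jet polynomials and completes the argument.

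The only genuine obstacle is conceptual rather than computational: one must keep the functor-of-points bookkeeping straight, in particular remembering that the jet coefficients truly live in $C^{\infty}(S)$ and that ``$\circ\,\psi$'' acts on them through the pullback $\psi^{*}$, so that the asserted equality is an equality of $C^{\infty}(P)$-valued polynomials in $z$. Once this interpretation is fixed, the statement is exactly the naturality of the Taylor expansion under base change along $\psi$, and the essential mathematical content is the $t$-independence of $\psi$ that makes $\partial/\partial t$ commute with the comorphism.
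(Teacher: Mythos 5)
Your proof is correct and takes essentially the same route as the paper: both rest on the single observation that $\psi$ (equivalently, the pullbacks of $S$-coordinates under $\psi\times\Id_{\mathbb{R}}$) is independent of $t$, so that $\partial/\partial t$ commutes with composition by $\psi$, after which one evaluates at $t_{0}$ and compares the jet polynomials coefficient by coefficient. The paper merely states these steps more tersely, while you spell out the chain-rule verification in local coordinates and the coefficientwise bookkeeping explicitly; no substantive difference.
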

\begin{proof}
We have
\begin{equation*}
\Upsilon_{P}(t):= \left(\Upsilon_{S}\circ (\psi \times \Id_{R})\right)(t) = \Upsilon_{S}(t) \circ \psi,
\end{equation*}
for all $t \in \mathbb{R}$. Then taking the derivative with respect to $t$ an arbitrary number of times yields
\begin{equation*}
\frac{\partial^{r} \Upsilon_{P}(t)}{ \partial  t^{r}\hfill } = \frac{\partial^{r}}{\partial t^{r}} \left(\Upsilon_{S}(t) \circ \psi  \right) = \left( \frac{\partial^{r} \Upsilon_{S}(t)}{\partial  t^{r} \hfill}\right)\circ \psi,
\end{equation*}
as the morphism $\psi$ is independent of $t$. Then one obtains
\begin{equation*}
\left.\frac{\partial^{r} \Upsilon_{P}}{\partial t^{r}\hfill}\right|_{t_{0}} = \left( \left.\frac{\partial^{r} \Upsilon_{S}}{\partial  t^{r} \hfill}\right|_{t_{0}}\right) \circ \psi.
\end{equation*}
Then as the morphism $\psi$  is also independent of $z$ we obtain the desired result.
\end{proof}

\section{Curves on supermanifolds and higher order tangent bundles}\label{sec:curveshighertangent}
We are now in a position to describe curves on supermanifolds and the notion of their jets.  We will attempt to follow classical reasoning, say c.f. \cite{Kolar:1993}, as much as possible. There is a  clear problem with the na\"{i}ve definition of a curve as a morphism of supermanifolds $\mathbb{R} \rightarrow M$ and the subsequent construction of their jets. Such a  classical definition of a curve, as  morphisms of supermanifolds preserve Grassmann parity, totally misses the odd dimensions of the supermanifold. The only points that such classical curves can pass through are the topological points of the underlaying reduced manifold. Thus if one attempts to follow the classical constructions more or less identically, then the resulting structures are classical ones on the underlying reduced manifold. The resolution of these issues is to employ the internal Homs and the functor of points in our constructions.\\

Informally a curve on a supermanifold $M$  is a smooth map $\gamma \in \InHom(\mathbb{R}, M)$. To make proper sense of this  we  ``probe"  $\InHom(\mathbb{R}, M)$ using the functor of points.

\begin{definition}\label{def:param curve supermanifold}
A \emph{curve on a supermanifold} $M$ \emph{parameterised by another supermanifold} $S$ is a smooth morphism $\gamma_{S} \in \InHom(\mathbb{R}, M)(S)$. We will refer to $\mathbb{R}$ as the source, $M$ as the target and $S$ as the parameterisation of the curve. For brevity we will refer to  $\gamma_{S}$ as an \emph{S-curve}.
\end{definition}

In practice S-curves may only be locally defined on $M$, but we will not make an issue of this here. Note that we do not really ever deal with \emph{a single} map, but rather always a family. Note also that as $\InHom(\mathbb{R}, M)(S) = \Hom(S \times \mathbb{R} ,M)$ the image set of an S-curve $\gamma_{S}(\mathbb{R}) \subset M(S)$ is a collection of S-points of $M$. This is close to classical thinking as an S-curve traces out S-points. A little more carefully, we can view $\gamma|_{t_{0}}: \InHom(\mathbb{R}, M) \rightarrow M$ as a natural transformation via restriction of an S-curve to a specific point $t_{0} \in \mathbb{R}$ and realising  that $\InHom(\{pt\} ,M) =M$.  \\

Moreover, as we are dealing with morphisms between supermanifolds one can describe everything locally in terms of coordinates. With that in mind, let us consider some coordinate system $(x^{A}) = (x^{\mu}, \theta^{i})$ on $U \subset M$. We also employ a  (global) coordinate system $(t)$ on $\mathbb{R}$. We will not write out explicitly any coordinate system on $S$, or in other words we just localise the S-curves on $M$. Then we can describe any S-curve as

\begin{equation*}
 (x^{A}\circ \gamma_{S})(t) = (x_{S}^{\mu}(t) , \theta_{S}^{i}(t)),
\end{equation*}

which is a system of even and odd functions in $C^{\infty}(\mathbb{R}) \times C^{\infty}(S)$.  The statement that a S-curve $\gamma_{S}$ passes through a point $m \in M(S)$ means $\gamma_{S}(0) = m$ which locally on $M$ is equivalent to the specification of the system of even and odd functions  $(x_{S}^{\mu}(0) , \theta_{S}^{i}(0))$.

\begin{remark}
Supercurves are also found in the literature as morphisms  $\Hom(\mathbb{R}^{1|1},M)$.  These morphisms should not be confused with the notion of an S-curve as used in this work. Importantly it is not sufficient to consider just supercurves in the proceeding constructions.  We will further comment on this in a later section.
\end{remark}

\begin{definition}
Two S-curves $\gamma_{S}$ and $\delta_{S} \in \InHom(\mathbb{R}, M)(S)$ are said to be at \emph{contact to order k at} $m \in M(S)$ if and only if $\gamma_{S}(0) = \delta_{S}(0) = m$ and for every function $f \in C^{\infty}(M)$  we have

\begin{equation*}
J^{k}_{0}(f \circ \gamma_{S}) = J^{k}_{0}(f \circ \delta_{S}).
\end{equation*}

In this case we will write $j_{m}^{k}\gamma_{S} = j_{m}^{k}\delta_{S}$.  An  equivalence class of this relation is a  \emph{(k,S)-jet} from $\mathbb{R}$ to $M$ and we will  denote these by $[\gamma]_{S}^{k}$.
\end{definition}

\begin{remark}
The compositions $f \circ \gamma_{S}$ and $f \circ \delta_{S}$ make sense and are elements in $\InHom(\mathbb{R}, \mathbb{R}^{1|1} )(S)$, that is superfunctions on $\mathbb{R}$ (parameterised by $S$) in the language introduced in the previous section. Thus for a specified $S$ the composition describes the evaluation of the function $f$ over a family of S-points parameterised by $t \in \mathbb{R}$.
\end{remark}

\noindent \textbf{Notation} The set of all (k,S)-jets from $\mathbb{R}$ to $M$ with target $m \in M(S)$ will be denoted by $J_{0}^{k}(\mathbb{R}, M(S))_{m}$. The set of all (k,S)-jets from $\mathbb{R}$ to $M$  will be denoted as  $J_{0}^{k}(\mathbb{R}, M(S))$. We will denote the map between sets that assigns to an S-curve a  (k,S)-jet as
\begin{eqnarray}
\InHom(\mathbb{R}, M)(S) &\stackrel{j_{S}^{k}}{\longrightarrow} & J_{0}^{k}(\mathbb{R}, M(S)) \\
\nonumber \gamma_{S} &\mapsto & [\gamma]_{S}^{k}.
\end{eqnarray}
 Directly from Lemma \ref{lem:naturality} we are led to the following construction.

\begin{construction}
 Given an arbitrary $\psi \in \Hom(P,S)$ we have the induced map of sets
\begin{eqnarray}
\nonumber \Psi^{(k)}: J^{k}_{0}(\mathbb{R}, M(S)) &\rightarrow& J^{k}_{0}(\mathbb{R}, M(P))\\
\nonumber [\gamma]_{S}^{k} &\mapsto& [\gamma]_{P}^{k} := [\gamma \circ(\psi \times \Id_{\mathbb{R}})]_{S}^{k} =[\gamma]_{S}^{k}\circ \psi,
\end{eqnarray}
\end{construction}

 All these proceeding considerations leads the following definition.
 \begin{definition}
 \emph{The k-th order tangent bundle} $T^{(k)}M$ of a supermanifold $M$ is the generalised supermanifold defined by $(T^{(k)}M)(\bullet) := J_{0}^{k}(\mathbb{R}, M(\bullet)) \in \widehat{\catname{SM}}$.
 \end{definition}

 \noindent \textbf{Warning} The k-th order tangent bundle must be viewed  as a  \emph{generalised supermanifold} at this stage.  We need to address the \emph{naturality}  of this construction and then the \emph{representability}. Furthermore, the k-th order tangent bundle should not be confused with the k-th order iterated tangent bundle $T^{k}M := T T \cdots TM$  (k-times).\\

Given an arbitrary $\psi \in \Hom(P,S)$ we have the following diagram which must be commutative if the construction of the k-th order tangent bundle is to be natural.

\begin{diagram}[htriangleheight=30pt ]
\InHom(\mathbb{R},M)(S)& \rTo^{\Psi} & \InHom(\mathbb{R},M)(P)\\
\dTo^{j_{S}^{k}} & &  \dTo^{j_{P}^{k}}  \\
(T^{(k)}M)(S) &\rTo^{\Psi^{(k)}} & (T^{(k)}M)(P)
\end{diagram}

 In other words, one wants $j^{k} : \InHom(\mathbb{R},M) \rightarrow T^{(k)}M$ to be a natural transformation between the respective generalised supermanifolds. The map $\Psi$ is given by $\gamma_{S}\mapsto \gamma_{P} := \gamma_{S} \circ(\psi \times \Id_{\mathbb{R}})$. The map $\Psi^{(k)}$ is given by $[\gamma]_{S}^{k} \mapsto [\gamma]_{P}^{k} := [\gamma]_{S}^{k} \circ \psi$. Thus we need to show that $j_{P}^{k} \circ \Psi =\Psi^{(k)} \circ j_{S}^{k}$ in order to prove we have a natural transformation.

\begin{proposition}\label{prop:naturality}
The construction of the  k-th order tangent bundle $T^{(k)}M$ is natural.
\end{proposition}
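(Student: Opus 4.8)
The plan is to verify the commutativity of the naturality square directly by chasing an arbitrary $S$-curve $\gamma_{S} \in \InHom(\mathbb{R},M)(S)$ around its two sides. Along the top-then-right path one obtains $j_{P}^{k}(\Psi(\gamma_{S})) = j_{P}^{k}(\gamma_{P}) = [\gamma_{P}]_{P}^{k}$, where $\gamma_{P} = \gamma_{S}\circ(\psi \times \Id_{\mathbb{R}})$, while along the left-then-bottom path one obtains $\Psi^{(k)}(j_{S}^{k}(\gamma_{S})) = \Psi^{(k)}([\gamma]_{S}^{k}) = [\gamma]_{S}^{k}\circ \psi$. The whole proposition therefore reduces to showing that these two $(k,P)$-jets agree, together with a check that the vertical and bottom maps are genuinely defined on contact classes.

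First I would recall that, by the very definition of contact to order $k$, a $(k,S)$-jet $[\gamma]_{S}^{k}$ is nothing more than the data of the basepoint $\gamma_{S}(0) \in M(S)$ together with the collection of $k$-th jets $J_{0}^{k}(f\circ \gamma_{S})$ of the superfunctions $f\circ \gamma_{S} \in \InHom(\mathbb{R},\mathbb{R}^{1|1})(S)$, as $f$ ranges over $C^{\infty}(M)$. Hence to compare the two jets above it is enough to compare, for each fixed $f \in C^{\infty}(M)$, the superfunction jet $J_{0}^{k}(f\circ \gamma_{P})$ with $\big(J_{0}^{k}(f\circ \gamma_{S})\big)\circ \psi$, and to note that the basepoints match, since $\gamma_{P}(0) = \gamma_{S}(0)\circ \psi$ by construction.

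The key step is then to set $\Upsilon_{S} := f\circ \gamma_{S}$ and observe, by associativity of composition, that its pullback is $\Upsilon_{P} = \Upsilon_{S}\circ(\psi \times \Id_{\mathbb{R}}) = f\circ\gamma_{S}\circ(\psi \times \Id_{\mathbb{R}}) = f\circ \gamma_{P}$. Lemma \ref{lem:naturality} applied to $\Upsilon_{S}$ gives at once $J_{0}^{k}(f\circ\gamma_{P}) = J_{0}^{k}\Upsilon_{P} = \big(J_{0}^{k}\Upsilon_{S}\big)\circ\psi = \big(J_{0}^{k}(f\circ\gamma_{S})\big)\circ\psi$, which is exactly the desired identity for each $f$. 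As this holds for all $f$, we conclude $[\gamma_{P}]_{P}^{k} = [\gamma]_{S}^{k}\circ\psi$, establishing $j_{P}^{k}\circ \Psi = \Psi^{(k)}\circ j_{S}^{k}$.

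The one point that genuinely requires care---and the main obstacle---is well-definedness on equivalence classes: one must confirm that $\Psi^{(k)}$ sends contact-equivalent $S$-curves to contact-equivalent $P$-curves, so that the bottom arrow of the square is a map of sets of jets rather than of representatives. This too follows from Lemma \ref{lem:naturality}: if $\gamma_{S}$ and $\delta_{S}$ are in contact to order $k$, so that $J_{0}^{k}(f\circ\gamma_{S}) = J_{0}^{k}(f\circ\delta_{S})$ for all $f$, then post-composing both sides with $\psi$ and invoking the lemma yields $J_{0}^{k}(f\circ\gamma_{P}) = J_{0}^{k}(f\circ\delta_{P})$, i.e. $\gamma_{P}$ and $\delta_{P}$ are in contact to order $k$. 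Once this is in place the square commutes for every $\psi \in \Hom(P,S)$, which is precisely the assertion that $j^{k}\colon \InHom(\mathbb{R},M)\Rightarrow T^{(k)}M$ is a natural transformation, and hence the construction is natural.
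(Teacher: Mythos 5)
Your proof is correct and follows essentially the same route as the paper's: reduce the comparison of jets to superfunctions $f\circ\gamma_{S}$ for arbitrary $f\in C^{\infty}(M)$, apply Lemma \ref{lem:naturality} together with Construction \ref{con:morphism}, and pass to equivalence classes. Your explicit verification that $\Psi^{(k)}$ is well defined on contact classes is exactly what the paper compresses into the phrase ``passing to the equivalence classes establishes the proposition,'' so you have merely made the same argument more careful, not different.
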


\begin{proof}
 Let $f \in C^{\infty}(M)$ be an arbitrary function on $M$ and consider $f\circ \gamma_{S}  \in \InHom(\mathbb{R},\mathbb{R}^{1|1})(S)$. Then from Construction \ref{con:morphism} we have $f\circ \gamma_{S} \mapsto f \circ \gamma_{P} := f \circ \left( \gamma_{S} \circ (\psi \times \Id_{\mathbb{R}}) \right)$, where $\psi \in \Hom(P,S)$. Then one can apply Lemma \ref{lem:naturality} to obtain that $j_{m \circ \psi}^{k} \gamma_{P} = (j_{m}^{k} \gamma_{S}) \circ \psi$. Then passing to the equivalence classes establishes the proposition.
\end{proof}

\begin{lemma}\label{lem:local coords}
Let $(x^{A}) = (x^{\mu}, \theta^{i})$ be a coordinate system on $ U \subset M$. Then two S-curves $\gamma_{S}, \delta_{S} \in \InHom(\mathbb{R}, M)(S)$ are at contact to order $r$ at $m \in M(S)$ if and only if
\begin{equation}\label{eqn:partial derivatives}
\frac{\partial^{k}}{\partial t^{k}} (x^{A} \circ \gamma_{S})(0) = \frac{\partial^{k}}{\partial t^{k}} (x^{A} \circ \delta_{S})(0),
\end{equation}
where $k=0,1,2, \cdots ,r$ and for all coordinate functions $x^{A}$.
\end{lemma}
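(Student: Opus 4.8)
The plan is to reduce the notion of contact to order $r$, which by its definition quantifies over \emph{all} functions $f \in C^{\infty}(M)$, to the single condition \eqref{eqn:partial derivatives} involving only the coordinate functions $x^{A}$. The bridge between the two is the chain rule in local coordinates: the $t$-derivatives of $f \circ \gamma_{S}$ at $t=0$ can be written as universal polynomial expressions in the $t$-derivatives of the $x^{A} \circ \gamma_{S}$ together with the partial derivatives of $f$ evaluated at the common target $m$. Thus the real content of the lemma is that the coordinate functions already ``detect'' contact.

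First I would dispose of the ``only if'' direction, which is immediate. The coordinate functions $x^{\mu}$ (even) and $\theta^{i}$ (odd) are themselves elements of $C^{\infty}(M) = \Hom(M,\mathbb{R}^{1|1})$, so if $\gamma_{S}$ and $\delta_{S}$ are at contact to order $r$ then in particular $J_{0}^{r}(x^{A} \circ \gamma_{S}) = J_{0}^{r}(x^{A}\circ\delta_{S})$ for every $A$. By Construction \ref{con:jet}, equality of these jets is precisely equality of the Taylor coefficients up to order $r$, which is exactly \eqref{eqn:partial derivatives}.

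For the ``if'' direction I would argue as follows. Localise both curves so that $(x^{A}\circ\gamma_{S})(t)$ and $(x^{A}\circ\delta_{S})(t)$ are systems of even and odd functions of $t$ valued in $C^{\infty}(S)$, and write $f\circ\gamma_{S}$ locally as $f$ composed with the coordinate expressions of the curve. Since $t$ is an even parameter, $\partial/\partial t$ is an even derivation, and iterating the super chain rule gives, for each $k$, a finite Fa\`{a} di Bruno expression of the schematic form
\begin{equation*}
\frac{\partial^{k}}{\partial t^{k}}(f\circ\gamma_{S})\Big|_{0} = \sum \, \bigl(\partial_{A_{1}}\cdots\partial_{A_{l}} f\bigr)(m)\cdot \prod_{s} \frac{\partial^{j_{s}}}{\partial t^{j_{s}}}(x^{A_{s}}\circ\gamma_{S})\Big|_{0},
\end{equation*}
the sum running over partitions with all $j_{s}\le k$, where the coefficient $\bigl(\partial_{A_{1}}\cdots\partial_{A_{l}}f\bigr)(m)$ is the value of the iterated partial derivative of $f$ at the $S$-point $\gamma_{S}(0)=m$. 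The identical formula holds for $\delta_{S}$, with $m$ again the common target $\delta_{S}(0)=m$. The coefficients therefore coincide for the two curves, and the hypothesis \eqref{eqn:partial derivatives}, applied for $k=0,1,\dots,r$, forces every factor $\partial_{t}^{j_{s}}(x^{A_{s}}\circ\gamma_{S})|_{0}$ to agree with its counterpart for $\delta_{S}$. Hence the two sides agree for all $k\le r$ and all $f$, which is precisely contact to order $r$.

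The main obstacle is making the super chain rule rigorous, i.e.\ keeping track of the Koszul signs (suppressed in the display above) generated when the even derivation $\partial/\partial t$ is commuted past the odd factors, and verifying that the coefficients appearing are genuinely the partial derivatives of $f$ \emph{evaluated at the $S$-point $m$} --- that is, composed with $m$ to give elements of $C^{\infty}(S)$ --- rather than at some varying point. Both become routine once one observes that evaluation at $t=0$ collapses the inner point to $\gamma_{S}(0)=\delta_{S}(0)=m$, so the two curves share identical coefficient data and differ only through the factors controlled by \eqref{eqn:partial derivatives}.
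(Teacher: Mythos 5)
Your proposal is correct and follows essentially the same route as the paper: the ``only if'' direction is immediate because the coordinate functions $x^{A}$ are themselves elements of $C^{\infty}(M)$, and the ``if'' direction reduces the derivatives of $f\circ\gamma_{S}$ at $t=0$ via Fa\`{a} di Bruno's formula to the partial derivatives of $f$ at the common target $m$ together with the data in \eqref{eqn:partial derivatives}. Your version is in fact somewhat more explicit than the paper's (which leaves the Fa\`{a} di Bruno expansion and the evaluation-at-$m$ bookkeeping as ``easy to see''), but there is no substantive difference in approach.
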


\begin{proof}
If $j_{m}^{r}\gamma_{S} = j_{m}^{r}\delta_{S}$ then it is clear that \ref{eqn:partial derivatives} holds as each coordinate $x^{A}$ is a function on $M$.  In the other direction, assume that \ref{eqn:partial derivatives} holds. Let $f \in C^{\infty}(M)$ be an arbitrary function. Any  function $f$ on $M$ has a coordinate expression $f(x)$. Then using Fa\`{a} di Bruno's formula \cite{FaadiBruno:1855} (repeated application of the chain rule) it is easy to see that all derivatives up to order r of $f \circ \delta_{S}$ at $t=0$ depend only on the partial derivatives of $f$ at $\gamma_{S}(0)$ up to order r and on \ref{eqn:partial derivatives}. Thus $j_{m}^{r}\gamma_{S} = j_{m}^{r}\delta_{S}$.
\end{proof}

\begin{construction}\label{con:change of base}
Consider $\varphi  \in \Hom(M,N)$. Then there is an induced map of S-points
\begin{eqnarray}
\nonumber \Phi: M(S) &\rightarrow& N(S)\\
\nonumber m &\mapsto& \varphi \circ m,
\end{eqnarray}
which in turn induces a map
\begin{eqnarray}
\nonumber T^{(k)}\Phi: (T^{(k)}M)(S) &\rightarrow& (T^{(k)}N)(S)\\
\nonumber [\gamma]_{S}^{k} &\mapsto& [\varphi \circ \gamma]_{S}^{k}.
\end{eqnarray}
\end{construction}

\begin{theorem}\label{thrm:k-th order tangent bundle}
The k-th order tangent bundle $T^{(k)}M$ of a supermanifold $M$ is representable. Moreover as supermanifolds we have a series of affine fibrations  $T^{(k)}M \rightarrow T^{(k-1)}M \rightarrow \cdots \rightarrow TM \rightarrow M$, where we have $T^{(1)}M = TM$ and $T^{(0)}M = M$.
\end{theorem}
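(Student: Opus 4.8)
The plan is to construct an explicit supermanifold by gluing local coordinate models, to verify that it represents the functor $T^{(k)}M$, and then to read off the projections and their affine structure directly from the coordinate transformation rules; the naturality in $S$ that representability must respect is already secured by Lemma~\ref{lem:naturality} and Proposition~\ref{prop:naturality}, so the only genuinely new content is local representability together with the behaviour of the jet data under changes of coordinates on $M$. Fixing a chart $U \subset M$ with coordinates $(x^{A})=(x^{\mu},\theta^{i})$, Lemma~\ref{lem:local coords} tells us that a $(k,S)$-jet with target $m$ is freely and completely specified by the derivatives
\begin{equation*}
x^{\mu}_{S,(j)} := \frac{1}{j!}\frac{\partial^{j}}{\partial t^{j}}(x^{\mu}\circ\gamma_{S})(0) \in C^{\infty}(S)_{0}, \qquad \theta^{i}_{S,(j)} := \frac{1}{j!}\frac{\partial^{j}}{\partial t^{j}}(\theta^{i}\circ\gamma_{S})(0) \in C^{\infty}(S)_{1},
\end{equation*}
for $j=0,1,\dots,k$, where differentiation by the even parameter $t$ preserves parity. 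This is exactly the datum of an $S$-point of the superdomain $\widetilde{U}$ with even coordinates $(x^{\mu}_{(j)})_{j=0}^{k}$ and odd coordinates $(\theta^{i}_{(j)})_{j=0}^{k}$, so that if $U$ has dimension $p|q$ then $\widetilde{U}$ has dimension $(k+1)p \mid (k+1)q$, and over $U$ the functor $T^{(k)}M$ is represented by $\widetilde{U}$.

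Next I would compute the transition maps. For an overlapping chart with coordinates $y^{B}=y^{B}(x)$, the super Faà di Bruno formula already invoked in Lemma~\ref{lem:local coords} expresses each $\partial_{t}^{j}(y^{B}\circ\gamma_{S})(0)$ as a polynomial in the $x^{A}_{S,(l)}$ whose coefficients are partial derivatives of $y^{B}$ evaluated at the base point $x^{A}_{S,(0)}$. Because $y^{B}(x)$ is parity-preserving and $t$ is even, these maps preserve the $\mathbb{Z}_{2}$-grading and are morphisms of superdomains; they satisfy the cocycle condition because the jet construction respects composition of coordinate changes, which is immediate from Construction~\ref{con:change of base} and the associativity of composition. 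Gluing the $\widetilde{U}$ along these transition maps produces a supermanifold whose functor of points agrees, naturally in $S$, with $T^{(k)}M$, thereby establishing representability.

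For the tower of fibrations I would take the projection $T^{(k)}M \to T^{(k-1)}M$ to be $[\gamma]^{k}_{S}\mapsto[\gamma]^{k-1}_{S}$, which is well defined since contact to order $k$ forces contact to order $k-1$. The affine structure is the decisive structural feature of Faà di Bruno: the top derivative enters only through the single linear term
\begin{equation*}
\frac{\partial^{k}}{\partial t^{k}}(y^{B}\circ\gamma_{S})(0) = \left.\frac{\partial y^{B}}{\partial x^{A}}\right|_{x_{S,(0)}} \frac{\partial^{k}}{\partial t^{k}}(x^{A}\circ\gamma_{S})(0) + (\text{terms in lower-order jets only}),
\end{equation*}
so the top fibre coordinates $x^{A}_{(k)}$ transform linearly by the Jacobian pulled back from $M$, modulo an inhomogeneous shift depending only on the base $T^{(k-1)}M$; this is precisely the transition law of an affine bundle. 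Reading off $k=0$ leaves only the base point, giving $T^{(0)}M = M$, while $k=1$ leaves the point together with a first derivative transforming purely by the Jacobian, recovering $T^{(1)}M = TM$.

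I expect the main obstacle to be representability rather than the fibration structure: one must check that the Faà di Bruno transition functions genuinely close up into a supermanifold---that parity is preserved and the cocycle condition holds---and that this gluing is compatible with the naturality already established. The super sign bookkeeping in Faà di Bruno is the fiddly part, but the essential feature, that only the top derivative contributes linearly while all cross terms are confined to strictly lower orders, is robust and is exactly what simultaneously delivers the gluing, the representability, and the affine fibration.
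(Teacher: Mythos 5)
Your proposal follows essentially the same route as the paper's proof: local description of $(k,S)$-jets by their Taylor coefficients via Lemma~\ref{lem:local coords}, Fa\`{a} di Bruno transition laws that preserve Grassmann parity and depend polynomially only on jets of order $\leq r$, and the affine fibrations read off from the fact that the top derivative enters linearly through the Jacobian. If anything, you spell out two points the paper leaves implicit, namely the cocycle condition for gluing the local models $\widetilde{U}$ and the explicit affine transition law, so your argument is correct and differs only in being more detailed.
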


\begin{proof}
Lemma \ref{lem:local coords} allows us  to describe the elements of $\left(T^{(k)}M\right)(S)$ locally on $M$ via coordinate systems using the Taylor expansion of the coordinate expression for the generating S-curves about the point $t=0$. Thus, with respect to any coordinate system $(x^{A})$ on $U \subset M$  an element of the set $\left(T^{(k)}U\right)(S)$ is a tuple of the form

\begin{equation}\label{eqn:coords}
 (x^{A}_{S}, \dot{x}^{A}_{S} , \ddot{x}^{A}_{S}, \cdots, \kAbove{k}{x^{A}_{\!S}}),
\end{equation}

which is an array of collections of functions on $S$ defined as

\begin{equation*}
\kAbove{r}{x^{A}_{S}} := \frac{1}{r!} \left.\frac{\partial^{r}}{\partial t^{r}}(x^{A}\circ \gamma_{S})\right|_{t=0}.
\end{equation*}

  Simple counting of the number of coordinates shows that if the supermanifold $M$ is of dimension $(n|m)$ (both are finite) then the array of functions given by \ref{eqn:coords} is of dimension $((k+1)n | (k+1)m)$.   Now, let $\varphi \in \Hom(U,V)$ be a morphism between two superdomains in the neighborhood of some point on $|M|$. Let us equip the superdomain  $V \subset M$ with local coordinates $(y^{a})$. Then as usual we have $(y^{a} \circ \varphi) = \varphi^{*}y^{a} = y^{a}(x)$. Then at the level of coordinates the S-points transform as
\begin{equation*}
\varphi^{*}y_{S}^{a}(x) = (y^{a} \circ \varphi )\circ m =y^{a}(x) \circ m.
\end{equation*}
One can deduce the transformation rules for the coordinate expressions of the S-points on the k-th order tangent bundle using Construction \ref{con:change of base};

\begin{equation*}
y_{S}^{a} \circ \Phi=(y^{a}\circ \varphi \circ \gamma_{S})(0), \hspace{25pt}
  \kAbove{r}{y^{a}_{\!S}} \circ T^{(r)}\Phi = \left.\frac{1}{r!}\frac{\partial^{r}}{\partial t^{r}} (y^{a}\circ \varphi \circ \gamma_{S})(t)\right|_{t=0},
\end{equation*}

where $0< r\leq k$. By appealing to Fa\`{a} di Bruno's formula we note that $ \kAbove{r}{y^{a}_{\!S}} \circ T^{(r)}\Phi $ depends only on $\kAbove{l}{x^{A}_{S}}$ where $l\leq r$ and polynomially. Furthermore, each term is such that $r = \sum l_{i}$. Importantly it is easy to see that transformation rules preserve the Grassmann parity. This establishes that $T^{(k)}U$, for any superdomain $U \subset M$ is representable as each S-point is described by a finite array of functions on $S$ and that changes of coordinates on $U$ induce well-defined transformation rules for the corresponding coordinate expressions of S-points. Thus we conclude that the k-th order tangent bundle of a supermanifold is representable.  Furthermore we  have a series  natural of affine fibrations   $T^{(k)}M\rightarrow T^{(k-1)}M \rightarrow \cdots \rightarrow TM \rightarrow M$ between supermanifolds induced by the transformation rules for the coordinate expression of the S-points.
\end{proof}

\begin{remark}
The series of projections $ \pi_{l}: T^{(l)}M \rightarrow T^{(l-1)}M$ can be directly understood via  the map $ [\gamma]_{S}^{k} \rightarrow [\gamma]_{S}^{k-1}$ on the equivalence classes of the S-curves. \\
\end{remark}

\begin{corollary}\label{corr:tangent functor}
From Construction \ref{con:change of base} and Theorem \ref{thrm:k-th order tangent bundle} we have the  functor $T^{(k)}: \catname{SM} \rightarrow \catname{SM}$.  Moreover, as  any morphism of supermanifolds $S \times \mathbb{R} \rightarrow M_{1} \times M_{2}$ coincides with a pair of morphisms  $S \times \mathbb{R} \rightarrow M_{1}$ and $S \times \mathbb{R} \rightarrow  M_{2}$ the functor $T^{(k)}$ preserves products, just as in the classical case.
\end{corollary}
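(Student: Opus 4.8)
The plan is to treat the two assertions separately, since the functoriality of $T^{(k)}$ is essentially formal while product preservation carries the only real content. For the functor structure, the assignment on objects is supplied directly by Theorem~\ref{thrm:k-th order tangent bundle}: representability guarantees that $T^{(k)}M$ is a genuine object of $\catname{SM}$ rather than merely a generalised supermanifold. On morphisms I would start from Construction~\ref{con:change of base}, which for $\varphi \in \Hom(M,N)$ produces, for each $S$, the map $T^{(k)}\Phi : (T^{(k)}M)(S) \rightarrow (T^{(k)}N)(S)$ sending $[\gamma]_S^k \mapsto [\varphi \circ \gamma]_S^k$. The first thing to check is that this family is natural in $S$: for $\psi \in \Hom(P,S)$ one must verify that $\Psi^{(k)}_N \circ T^{(k)}\Phi_S = T^{(k)}\Phi_P \circ \Psi^{(k)}_M$, and both composites send $[\gamma]_S^k$ to $[\varphi \circ (\gamma \circ (\psi \times \Id_{\mathbb{R}}))]^k$, so the square commutes by associativity of composition, which is the same mechanism underlying Proposition~\ref{prop:naturality}. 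Since both $(T^{(k)}M)(-)$ and $(T^{(k)}N)(-)$ are representable by Theorem~\ref{thrm:k-th order tangent bundle}, Yoneda's lemma then descends this natural transformation to an honest morphism $T^{(k)}\varphi : T^{(k)}M \rightarrow T^{(k)}N$. The functoriality axioms are immediate from the defining formula: $T^{(k)}(\Id_M)$ sends $[\gamma]_S^k \mapsto [\Id_M \circ \gamma]_S^k = [\gamma]_S^k$, and for $\varphi \in \Hom(M,N)$, $\varphi' \in \Hom(N,P)$ one has $[(\varphi' \circ \varphi)\circ\gamma]_S^k = [\varphi' \circ (\varphi \circ \gamma)]_S^k$, so $T^{(k)}(\varphi'\circ\varphi) = T^{(k)}\varphi' \circ T^{(k)}\varphi$.

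For product preservation I would build a natural isomorphism $(T^{(k)}(M_1 \times M_2))(S) \cong (T^{(k)}M_1)(S) \times (T^{(k)}M_2)(S)$ and again invoke Yoneda. The universal property of the product quoted in the statement gives a bijection $\InHom(\mathbb{R}, M_1 \times M_2)(S) \cong \InHom(\mathbb{R}, M_1)(S) \times \InHom(\mathbb{R}, M_2)(S)$, natural in $S$, identifying an S-curve $\gamma_S$ with a pair $(\gamma_S^1, \gamma_S^2)$. It then remains to show that the contact relation of order $k$ factors through this product, i.e. that $j_m^k \gamma_S = j_m^k \delta_S$ holds at $m = (m_1,m_2)$ if and only if $j_{m_i}^k \gamma_S^i = j_{m_i}^k \delta_S^i$ for $i = 1,2$. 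The forward direction follows by testing against functions pulled back along the projections $M_1 \times M_2 \rightarrow M_i$; the reverse direction is where Lemma~\ref{lem:local coords} does the work, since a coordinate system on $M_1 \times M_2$ is the union of coordinate systems on the two factors, each coordinate being pulled back from a single factor, so equality of all coordinate derivatives \ref{eqn:partial derivatives} on each factor forces it on the product. This yields the desired natural bijection on the sets of $(k,S)$-jets, hence on the S-points of the two sides, and representability makes it an isomorphism in $\catname{SM}$.

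The functor axioms and the universal-property bijection are routine, so the one genuine obstacle is the factorisation of the contact relation over a product. I expect the clean way through it is to lean entirely on Lemma~\ref{lem:local coords}: because contact to order $k$ can be tested solely on coordinate functions, and the coordinate atlas of $M_1 \times M_2$ splits as the union of the atlases of the factors, the $(k,S)$-jet of a product curve is literally the concatenation of the coordinate arrays \ref{eqn:coords} of its two components, and parity is respected factorwise. Care is needed only to confirm that every step, namely the curve decomposition, the jet factorisation, and the induced isomorphism, is natural in $S$, so that Yoneda applies and the product is preserved on the nose rather than merely up to a non-canonical bijection of the S-point sets.
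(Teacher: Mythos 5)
Your proposal is correct and follows essentially the same route as the paper, which gives no separate proof beyond the observation embedded in the corollary itself that a morphism $S \times \mathbb{R} \rightarrow M_{1} \times M_{2}$ splits into a pair of morphisms to the factors; your Yoneda descent of the natural transformation and your factorisation of the order-$k$ contact relation over the product (projections for one direction, Lemma~\ref{lem:local coords} for the other) simply make explicit the steps the paper leaves tacit. No gaps.
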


\begin{example}
To illustrate the above theorem let us concentrate on $T^{(2)}M$. Let us equip $M$ with local coordinates $(x^{A})$. Then the S-points of  $T^{(2)}M$ can locally on $M$ be described by $(x^{A}_{S}, \dot{x}^{A}_{S}, \ddot{x}^{A}_{S})$. Now, let us consider a local diffeomorphism $\varphi : M \rightarrow M$, which in local coordinates is represented by $x^{A'} \circ \varphi = x^{A'}(x) $. Then we can calculate the effects of this change of local coordinates on the S-points
\begin{eqnarray}
\nonumber x^{A'}_{S} &=&  (x^{A'} \circ \varphi) \circ \gamma_{S}(0),\\
\nonumber \dot{x}^{A'}_{S} &=& \left.\frac{\partial (x^{B} \circ \gamma_{S})(t)}{\partial t} \right|_{t=0}\cdot\frac{\partial (x^{A'}\circ \varphi  )}{\partial x^{B}}\circ \gamma_{S}(0)= \dot{x}^{B}_{S} \cdot\left( \frac{\partial x^{A'}(x)}{\partial x^{B}} \circ \gamma_{S}(0)\right),\\
\nonumber \ddot{x}^{A'}_{S} &=&\left. \frac{1}{2}\frac{\partial^{2} (x^{B} \circ \gamma_{S})(t)}{\partial t^{2}} \right|_{t=0} \cdot \frac{\partial (x^{A'} \circ \varphi)}{\partial x^{B}} \circ \gamma_{S}(0) + \frac{1}{2} \left.\frac{\partial (x^{B} \circ \gamma_{S})(t)}{\partial t} \frac{\partial (x^{C} \circ \gamma_{S})(t)}{\partial t}     \right|_{t=0}\cdot \frac{\partial^{2} (x^{A'}\circ \varphi  )}{ \partial x^{C}\partial x^{B}}\circ \gamma_{S}(0)\\
\nonumber &=& \ddot{x}^{B}_{S} \left( \frac{\partial x^{A'}(x)}{\partial x^{B}}  \circ \gamma_{S}\right) + \frac{1}{2} \dot{x}^{B}_{S} \dot{x}^{C}_{S}\cdot \left(\frac{\partial^{2} x^{A'}(x)}{\partial x^{C} \partial x^{B}}  \circ \gamma_{S}(0)  \right).
\end{eqnarray}
We see that we can then deduce that changes of coordinates on $T^{(2)}M$  are (using standard abuses of notation)
\begin{equation*}
x^{A'} = x^{A'}(x),\hspace{15pt}
\dot{x}^{A'} = \dot{x}^{B} \frac{\partial x^{A'}}{\partial x^{B}},\hspace{15pt}
 \ddot{x}^{A'} = \ddot{x}^{B} \frac{\partial x^{A'}}{\partial x^{B}} + \frac{1}{2}\dot{x}^{B} \dot{x}^{C} \frac{\partial^{2} x^{A'}}{\partial x^{C} \partial x^{B}},
\end{equation*}
which are of course of the same form as the classical case.
\end{example}

\begin{remark}
The standard algebraic approach to defining the tangent bundle of a supermanifold is to define it  in terms of the derivations on the algebra of functions on the supermanifold. The derivations of the functions form a sheaf of locally free modules on the supermanifold and so define algebraically a vector bundle structure. In this note we have a kinematic definition of the total space of the tangent bundle in terms of equivalence classes of S-curves.  This construction is already well-know, however the explicit construction of the higher order tangent bundles in this way appears to be missing from the literature.
\end{remark}

\begin{statement}
At the operational level  of local coordinates the  k-th order tangent bundle of a supermanifold can be defined via Taylor expansions of the coordinate expression for the generating curves. Moreover, the k-th order tangent bundle of a supermanifold  can be understood in terms of (adapted) local coordinates and their transformation laws in exactly the same way as the classical case.
\end{statement}

\section{The graded structure of the k-th order tangent bundle}
The \emph{homotheties} on $T^{(k)}M$, that is  particular smooth actions of the multiplicative semigroup $(\mathbb{R},\cdot)$, can also be understood geometrically. Specifically, we have the canonical action of $\mathbb{R}$ on $\mathbb{R}$ as
\begin{eqnarray}
\nonumber \mathrm{g} : \mathbb{R} \times \mathbb{R} &\rightarrow & \mathbb{R}\\
\nonumber (\lambda , t) &\mapsto& \mathrm{g}(\lambda, t) = \lambda t.
\end{eqnarray}

We will write $\mathrm{g}(\lambda, t) = \mathrm{g}_{\lambda}(t)$.

\begin{construction}\label{con:action S-curves}
The above  canonical action of   $(\mathbb{R},\cdot)$ extends to an action on the S-curves viz
\begin{eqnarray}
\nonumber \widehat{\mathrm{h}}_{S} : \mathbb{R} \times \InHom(\mathbb{R}, M)(S) &\rightarrow & \InHom(\mathbb{R}, M)(S) \\
\nonumber (\lambda , \gamma_{S}) &\mapsto&  \gamma_{S} \circ ( \Id_{S} \times \mathrm{g}_{\lambda}).
\end{eqnarray}
\end{construction}

\begin{proposition}\label{prop:naturality action S-curves}
The action of the multiplicative semigroup  $(\mathbb{R}, \cdot)$ on the set of S-curves $\InHom(\mathbb{R}, M)(S)$ is natural.
\end{proposition}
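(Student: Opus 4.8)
The plan is to unwind what \emph{naturality} of the action means in the functor-of-points framework and then reduce the required identity to the bifunctoriality (interchange law) of the categorical product in $\catname{SM}$. Recall from Construction \ref{con:morphism} and the discussion preceding Proposition \ref{prop:naturality} that, for a morphism $\psi \in \Hom(P,S)$, the structure map of the generalised supermanifold $\InHom(\mathbb{R},M)$ is $\Psi : \InHom(\mathbb{R},M)(S) \rightarrow \InHom(\mathbb{R},M)(P)$, $\gamma_{S} \mapsto \gamma_{S} \circ (\psi \times \Id_{\mathbb{R}})$. Naturality of the action $\widehat{\mathrm{h}}$ then amounts to the assertion that, for every such $\psi$, the square with horizontal arrows $\widehat{\mathrm{h}}_{S}$ and $\widehat{\mathrm{h}}_{P}$ and vertical arrows $\Id_{\mathbb{R}} \times \Psi$ (on the left) and $\Psi$ (on the right) commutes, i.e. $\Psi \circ \widehat{\mathrm{h}}_{S} = \widehat{\mathrm{h}}_{P} \circ (\Id_{\mathbb{R}} \times \Psi)$, the scalar $\lambda$ being carried along untouched.

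First I would fix a pair $(\lambda, \gamma_{S}) \in \mathbb{R} \times \InHom(\mathbb{R},M)(S)$ and chase it around both routes of the square. Travelling along the top and then down the right edge produces $\gamma_{S} \circ (\Id_{S} \times \mathrm{g}_{\lambda}) \circ (\psi \times \Id_{\mathbb{R}})$, whereas travelling down the left edge and then along the bottom produces $\gamma_{S} \circ (\psi \times \Id_{\mathbb{R}}) \circ (\Id_{P} \times \mathrm{g}_{\lambda})$. Since $\gamma_{S}$ appears on the outside of both expressions, the proposition reduces to the single identity of morphisms $P \times \mathbb{R} \rightarrow S \times \mathbb{R}$,
\[
(\Id_{S} \times \mathrm{g}_{\lambda}) \circ (\psi \times \Id_{\mathbb{R}}) = (\psi \times \Id_{\mathbb{R}}) \circ (\Id_{P} \times \mathrm{g}_{\lambda}).
\]

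The key observation is that each side collapses, by the interchange law for the product functor, to the same morphism $\psi \times \mathrm{g}_{\lambda}$: the left-hand side equals $(\Id_{S} \circ \psi) \times (\mathrm{g}_{\lambda} \circ \Id_{\mathbb{R}})$ and the right-hand side equals $(\psi \circ \Id_{P}) \times (\Id_{\mathbb{R}} \circ \mathrm{g}_{\lambda})$, and both simplify to $\psi \times \mathrm{g}_{\lambda}$. The computation is entirely formal and closely parallels Lemma \ref{lem:naturality}; the only point requiring genuine care — and what I regard as the crux — is the clean separation of variables, namely that the reparameterisation $\psi$ acts solely on the parameterising factor $S$ while the homothety $\mathrm{g}_{\lambda}$ acts solely on the source $\mathbb{R}$, so that the two operations commute. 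This disjointness is exactly what makes the homothety compatible with the structure maps $\Psi$ of the functor of points. As the identity holds for arbitrary $\lambda \in \mathbb{R}$ and arbitrary $\psi \in \Hom(P,S)$, the family $\{\widehat{\mathrm{h}}_{S}\}_{S}$ is a natural transformation and the action is natural.
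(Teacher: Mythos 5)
Your proof is correct and takes essentially the same approach as the paper: the paper's own proof merely asserts the identity $\widehat{\mathrm{h}}_{P} \circ (\Id_{\mathbb{R}} \times \Psi) = \Psi \circ \widehat{\mathrm{h}}_{S}$ and explicitly leaves the verification to the reader as routine. Your diagram chase, reducing everything to the interchange law $(\Id_{S} \times \mathrm{g}_{\lambda}) \circ (\psi \times \Id_{\mathbb{R}}) = \psi \times \mathrm{g}_{\lambda} = (\psi \times \Id_{\mathbb{R}}) \circ (\Id_{P} \times \mathrm{g}_{\lambda})$, is precisely the omitted computation, carried out correctly.
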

\begin{proof}
Given  $\Psi : \InHom(\mathbb{R}, M)(S) \rightarrow \InHom(\mathbb{R}, M)(P)$ defined as $\gamma_{S} \mapsto \gamma_{S} \circ (\psi \times \Id_{\mathbb{R}})$ for arbitrary $\psi \in \Hom(P,S)$, one can directly show that
\begin{equation*}
\widehat{\mathrm{h}}_{P} \circ (\Id_{\mathbb{R}}\times \Psi ) = \Psi \circ \widehat{\mathrm{h}}_{S}.
 \end{equation*}
 We leave details to the reader as this is a matter of routine.
\end{proof}

\begin{construction}\label{con: action jets}
The action of the multiplicative semigroup  $(\mathbb{R}, \cdot)$ on S-curves extends to a canonical action on  $(T^{(k)}M)(S)$ viz
\begin{eqnarray}
\nonumber \mathrm{h}_{S}: \mathbb{R} \times (T^{(k)}M)(S) &\longrightarrow & (T^{(k)}M)(S)\\
\nonumber (\lambda, [\gamma]_{S}^{k}) &\mapsto & [\widehat{\gamma}(\lambda)]_{S}^{k},
\end{eqnarray}
where $\widehat{\gamma}_{S}(\lambda) :=  \widehat{\mathrm{h}}_{S}(\lambda, \gamma_{S}) = \gamma_{S} \circ (\Id_{S} \times \mathrm{g}_{\lambda})$. It is easy to verify that $\mathrm{h}_{S}(\lambda \mu) = \mathrm{h}_{S}(\lambda)\circ \mathrm{h}_{S}(\mu)$, where we define $\mathrm{h}_{S}(\nu) : (T^{(k)}M)(S) \rightarrow (T^{(k)}M)(S)$ by restriction of the action $\mathrm{h}_{S}: \{\nu\} \times (T^{(k)}M)(S) \longrightarrow  (T^{(k)}M)(S)$.
\end{construction}

\begin{theorem}\label{thrm:natural homotheties}
The  action of the multiplicative semigroup  $(\mathbb{R}, \cdot)$ on $T^{(k)}M$  is natural.
\end{theorem}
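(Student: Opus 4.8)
The plan is to deduce the naturality of the action $\mathrm{h}_{S}$ on jets directly from the two naturality statements already in hand: the naturality of the jet transformation $j^{k}$ (Proposition \ref{prop:naturality}) and the naturality of the action $\widehat{\mathrm{h}}_{S}$ on S-curves (Proposition \ref{prop:naturality action S-curves}). Concretely, for an arbitrary $\psi \in \Hom(P,S)$ with induced map $\Psi^{(k)} : (T^{(k)}M)(S) \rightarrow (T^{(k)}M)(P)$, naturality amounts to the commutativity
\[
\Psi^{(k)} \circ \mathrm{h}_{S} = \mathrm{h}_{P} \circ (\Id_{\mathbb{R}} \times \Psi^{(k)})
\]
of maps $\mathbb{R} \times (T^{(k)}M)(S) \rightarrow (T^{(k)}M)(P)$. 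Since every element of $(T^{(k)}M)(S)$ is a class $[\gamma]_{S}^{k}$ represented by some S-curve $\gamma_{S}$, it suffices to verify this equality on a typical pair $(\lambda, [\gamma]_{S}^{k})$.

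First I would record the defining relation from Construction \ref{con: action jets}, namely that $\mathrm{h}_{S}$ is the descent of $\widehat{\mathrm{h}}_{S}$ along the quotient $j_{S}^{k}$, so that $\mathrm{h}_{S}(\lambda, [\gamma]_{S}^{k}) = j_{S}^{k}\bigl(\widehat{\mathrm{h}}_{S}(\lambda, \gamma_{S})\bigr)$. Applying $\Psi^{(k)}$ and using the naturality of the jet transformation $\Psi^{(k)} \circ j_{S}^{k} = j_{P}^{k} \circ \Psi$ from Proposition \ref{prop:naturality}, followed by the naturality of the S-curve action $\Psi \circ \widehat{\mathrm{h}}_{S} = \widehat{\mathrm{h}}_{P} \circ (\Id_{\mathbb{R}} \times \Psi)$ from Proposition \ref{prop:naturality action S-curves}, yields
\[
\Psi^{(k)}\bigl(\mathrm{h}_{S}(\lambda, [\gamma]_{S}^{k})\bigr) = j_{P}^{k}\bigl(\Psi(\widehat{\mathrm{h}}_{S}(\lambda, \gamma_{S}))\bigr) = j_{P}^{k}\bigl(\widehat{\mathrm{h}}_{P}(\lambda, \gamma_{P})\bigr),
\]
where $\gamma_{P} := \Psi(\gamma_{S}) = \gamma_{S} \circ (\psi \times \Id_{\mathbb{R}})$. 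On the other side, $\Psi^{(k)}([\gamma]_{S}^{k}) = [\gamma]_{P}^{k} = j_{P}^{k}(\gamma_{P})$, so the definition of $\mathrm{h}_{P}$ gives $\mathrm{h}_{P}(\lambda, [\gamma]_{P}^{k}) = j_{P}^{k}(\widehat{\mathrm{h}}_{P}(\lambda, \gamma_{P}))$, which is exactly the expression obtained above. This establishes the commuting square.

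The only genuine point to address is that $\mathrm{h}_{S}$ is well defined on equivalence classes in the first place, i.e.\ that $j_{S}^{k}\gamma_{S} = j_{S}^{k}\delta_{S}$ implies $j_{S}^{k}(\widehat{\mathrm{h}}_{S}(\lambda,\gamma_{S})) = j_{S}^{k}(\widehat{\mathrm{h}}_{S}(\lambda,\delta_{S}))$, so that the descent along $j_{S}^{k}$ used above makes sense. I would check this in local coordinates via Lemma \ref{lem:local coords}, noting that the reparameterisation $t \mapsto \lambda t$ rescales the $r$-th Taylor coefficient by $\lambda^{r}$ and hence sends contact of order $k$ to contact of order $k$. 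Granting this, the naturality is a purely formal consequence of the surjectivity of $j_{S}^{k}$ together with the two prior propositions, so I expect no substantial obstacle; the argument is essentially the standard fact that an action descending along a natural quotient transformation is natural whenever the action it descends from is.
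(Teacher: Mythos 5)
Your proof is correct and follows essentially the same route as the paper: both verify the square $\mathrm{h}_{P}\circ(\Id_{\mathbb{R}}\times \Psi^{(k)})=\Psi^{(k)}\circ \mathrm{h}_{S}$ by evaluating each side on a representative pair $(\lambda,[\gamma]_{S}^{k})$ and reducing to the naturality of the jet assignment $j^{k}$ from Proposition \ref{prop:naturality}. If anything you are more complete than the paper, which cites only Proposition \ref{prop:naturality} and leaves implicit both the appeal to Proposition \ref{prop:naturality action S-curves} (absorbed there into the notational identification of $\widehat{\gamma}(\lambda)$ composed with $\psi\times\Id_{\mathbb{R}}$ with the rescaling of $\gamma_{P}$) and the well-definedness of $\mathrm{h}_{S}$ on jet classes, which you rightly check via the rescaling $\kAbove{r}{x^{A}_{S}}\mapsto \lambda^{r}\kAbove{r}{x^{A}_{S}}$ of Taylor coefficients using Lemma \ref{lem:local coords}.
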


\begin{proof}
The requirement to be natural is that the following diagram is commutative:

\begin{diagram}[htriangleheight=30pt ]
\mathbb{R} \times (T^{(k)}M)(S)& \rTo^{(\Id_{\mathbb{R}} \times \Psi^{(k)})} & \mathbb{R} \times (T^{(k)}M)(P)\\
\dTo^{\mathrm{h}_{S}} & &  \dTo^{\mathrm{h}_{P}}  \\
(T^{(k)}M)(S) &\rTo^{\Psi^{(k)}} & (T^{(k)}M)(P)
\end{diagram}
That is we require $\mathrm{h}_{P} \circ(\Id_{\mathbb{R}} \times \Psi^{(k)}) =  \Psi^{(k)} \circ \mathrm{h}_{S}$. Starting with the RHS we obtain $\Psi^{(k)}\left(\mathrm{h}_{S} (\lambda, [\gamma]_{S}^{k})  \right) = \Psi^{(k)}\left([\widehat{\gamma}(\lambda)]_{S}^{k})  \right) = [\widehat{\gamma}(\lambda)]_{S}^{k} \circ \psi$. On the LHS we have $\mathrm{h}_{P}\left((\Id_{\mathbb{R}} \times \Psi^{(k)})(\lambda, [\gamma]_{S}^{k}) \right) = \mathrm{h}_{P}(\lambda , [\gamma]_{P}^{k}) = [\widehat{\gamma}(\lambda)]_{P}^{k} = [\widehat{\gamma}(\lambda)]_{S}^{k} \circ \psi$, which follows from Proposition \ref{prop:naturality}.
\end{proof}

The above theorem states that we have a natural transformation $\mathrm{h}(\lambda):  T^{(k)}M \rightarrow T^{(k)}M $ when we fix a point $\lambda \in \mathbb{R}$ and  consider the k-th order tangent bundle of a supermanifold as a functor via the Yoneda embedding. Thus via Yoneda's lemma, we actually have a well-defined  action of $(\mathbb{R}, \cdot)$ on the supermanifold $T^{(k)}M$.\\

This action is best understood via local coordinates. It is only a  matter of applying the chain rule carefully to show that on the level of local coordinates the action on the S-points is given by
\begin{eqnarray}
\nonumber (\mathrm{h}_{S}^{ \lambda})^{*} x_{S}^{A} &=& x_{S}^{A},\\
\nonumber (\mathrm{h}_{S}^{\lambda})^{*}\kAbove{r}{x_{S}^{A}} &=& (\lambda)^{r} \kAbove{r}{x_{S}^{A}}.
\end{eqnarray}
One can then pass to the supermanifold $T^{(k)}M$ itself  using Yoneda's lemma. Typographically one only has to drop the subscript $S$. Note that $\mathrm{h}(0)$ corresponds to  the projection $\pi : T^{(k)}M \rightarrow M$ as genuine supermanifolds.\\

 The corresponding weight vector field  $\Delta \in \Vect(T^{(k)}M)$  is given by  the value of its action on any function $F\in C^{\infty}(T^{(k)}M)$ at an S-point;

 \begin{equation}
 (\Delta F)\circ [\gamma]_{S}^{k} := \left.\frac{d}{dt}\right|_{\lambda =1} F\circ [\widehat{\gamma}(\lambda)]_{S}^{k} =    \left(\sum_{r=1}^{k} r \kAbove{r}{x^{A}} \frac{\partial F }{\partial \kAbove{r}{x^{A}}}\right )\circ [\gamma]_{S}^{k}.
 \end{equation}

\begin{statement}
At the level of local coordinates the action of the multiplicative semigroup $(\mathbb{R}, \cdot)$ on the k-th order tangent bundle of a supermanifold is identical to the classical case. Moreover, we have the structure of a graded  super bundle in the language of \cite{Grabowski:2011}.
\end{statement}

\section{Comparison with other notions of curves}
 \noindent \textbf{Supercurves}\\
 \emph{Supercurves} are  understood as morphisms  in $\Hom(\mathbb{R}^{1|1} ,M) = \InHom(\mathbb{R}, M)(\mathbb{R}^{0|1})$ appear in the literature as the simplest generalisation of classical curves that can ``feel" the odd dimensions of a supermanifold, see for example \cite{Garnier:2012,Goertsches:2008} where they have been put to good use on Riemannian supermanifolds. However, supercurves are not sufficient to recover the notion of the k-th order tangent bundle, unless $M$ has at most one odd dimension. In fact, one cannot define  the tangent bundle in this way.\\

 Explicitly in any local coordinate system on $M$ supercurves $c \in \Hom(\mathbb{R}^{1|1} ,M) $ are of the form  $ c^{*}(x^{\mu}, \theta^{i}) = (x^{\mu}(t) , \tau \: u^{i}(t))$, where we have picked global coordinates $(t, \tau)$ on $\mathbb{R}^{1|1}$. Then Taylor expanding about $t =0$ to order one gives the collection of  functions

 \begin{equation}(x^{\mu}_{\mathbb{R}^{0|1}} , \theta^{i}_{\mathbb{R}^{0|1}} := \tau \: u^{i}_{\mathbb{R}^{0|1}} ,\dot{x}^{\mu}_{\mathbb{R}^{0|1}},  \dot{\theta}^{i}_{\mathbb{R}^{0|1}} :=\tau \: \dot{u}^{i}_{\mathbb{R}^{0|1}} ),
 \end{equation}

 on $\mathbb{R}^{0|1}$ which we interpret as the $\mathbb{R}^{0|1}$-points of the tangent bundle $TM$. Now consider a change of coordinates on $M$. As changes of coordinates respect the Grassmann parity we have the following expressions

 \begin{eqnarray}
 x^{\mu'}(x,\theta) &=& x^{\mu'}(x) + \sum_{Even} \frac{1}{l!}\theta^{i_{1}} \cdots \theta^{i_{l}}x^{\mu'}_{i_{l}\cdots i_{1}}(x)\\
 \nonumber \theta^{j'}(x,\theta) &=& \sum_{Odd} \frac{1}{l!}\theta^{i_{1}} \cdots \theta^{i_{l}}w^{j'}_{i_{l}\cdots i_{1}}(x).
 \end{eqnarray}

 It is not hard to see that the induced transformation laws of the $\mathbb{R}^{0|1}$-points of $TM$ are\\

 \begin{tabular}{ll}
 $x^{\mu'}_{\mathbb{R}^{0|1}} = (x^{\mu'}(x))\circ c(o,\tau)$, & $ \theta^{j'}_{\mathbb{R}^{0|1}} = (\theta^{j}w_{j}^{i'}(x)) \circ c(o,\tau)$,\\
 $\dot{x}^{\mu'}_{\mathbb{R}^{0|1}} = \dot{x}^{\nu}_{\mathbb{R}^{0|1}} \cdot \left(\frac{\partial x^{\mu'}(x)}{\partial x^{\nu}} \right) \circ c(o,\tau),$ & $ \dot{\theta}^{j'}_{\mathbb{R}^{0|1}} = \dot{\theta}^{i}_{\mathbb{R}^{0|1}}\cdot (w_{j}^{i'}(x))\circ c(0,\tau)$,
 \end{tabular}\\

taking into account the fact that $\tau^{2} = 0$. Thus we see that $\mathbb{R}^{0|1}$-points only really capture the vector bundle structure of $M$ and really misses the full structure of  $TM$ as a natural bundle.  This is of course better than the case of considering classical curves only, but is clearly not sufficient.  \\

\begin{remark}
Taylor expanding the local expressions of supercurves about $t = \tau =0$ is even worse than the situation above as this totally misses the odd directions on $M$ and hence we only construct the higher tangent bundles of the reduced manifold underlying $M$.
\end{remark}

\noindent \textbf{Higher dimensional supercurves}\\
Via the work of Schwarz  and Voronov \cite{Shvarts:1984,Voronov:1984} we know that it is sufficient to consider $\Lambda$-points, that is one can probe $\InHom(\mathbb{R},M)$ with supermanifolds of the form $\mathbb{R}^{0|l}$ for $l \geq 1$. One could then try to think  about ``curves" as being  in $\Hom(\mathbb{R}^{1|l} , M)$ for \emph{some} $l$. There are essentially two generic options for specifying $l$;
\renewcommand{\theenumi}{\roman{enumi})}%
\begin{enumerate}
\item One could think of $l$ being ``large enough"  (or even infinite!) so that all the computations are consistent. This is related to the DeWitt--Rogers  approach to supermanifolds \cite{DeWitt:1984,Rogers:1980}, where a supermanifold is a manifold modeled on  a Grassmann algebra equipped with some suitable (non-Hausdorff) topology. One has to further make restrictions on the classes of functions one considers in order to properly construct  supermanifolds in the DeWitt--Rogers  approach\footnote{The interested reader should consult Rogers \cite{Rogers:2007} for a clear comparison of the various approaches to supermanifolds. A brief account can be found in \cite{Helein:2009}.}.
\item One can keep $l$ arbitrary by  considering families of morphisms and rephrasing the constructions in the language of category theory, as we have done here.
\end{enumerate}

The problem with the first approach is the freedom in choosing an appropriate $l$, though of course it maybe a useful thing to do for calculational purposes. Once one has found a minimal $l$ suitable for the problem at hand, one can  always find a larger number that will also be suitable. Moreover, no construction using curves should depend in any critical way on this number provided it is large enough. The dependance of the number of odd dimensions to a curve should be functorial and so we are lead back to the philosophy of second approach listed above.\\

\noindent \textbf{Superpaths}\\
There is also the notion of a \emph{superpath}  as  maps living in $\InHom(\mathbb{R}^{1|1}, M)$. Superpaths were used to relate parallel transport and Quillen superconnections by Dumitrescu \cite{Dumitrescu:2008}. Note that we have $\Hom(S \times \mathbb{R}^{1|1}, M)  \simeq \Hom (S \times \mathbb{R}, \Pi TM) $. Here $\Pi TM = \InHom(\mathbb{R}^{0|1},M)$ is the antitangent bundle, and can be constructed from (the total space of) the tangent bundle by shifting the parity of the fibre coordinates. In short (parameterised) superpaths  on a supermanifold are S-curves on its antitangent bundle.  \\

\begin{remark}
One can of course iterate the above identification to show $\InHom(\mathbb{R}^{1|p},M)(S) \simeq \InHom(\mathbb{R}^{1|p-1},\Pi T M)(S) \simeq \InHom(\mathbb{R},(\Pi T)^{p}M)(S)$. That is (parameterised) higher dimensional superpaths on a supermanifold can be understood as S-curves on its  appropriately iterated antitangent bundle.
\end{remark}

We then define  the following supermanifold using the constructions in this note;
\begin{equation*}
\mathbb{T}^{(k)}M := T^{(k)}(\Pi TM).
\end{equation*}

Then one can check via local coordinates that we have the diffeomorphism $T^{(k)}(\Pi TM) \simeq \Pi T(T^{(k)}M)$. Thus we see that functions on the supermanifold of k-th jets of superpaths (with source zero) are (pseudo)differential forms on the k-th order tangent bundle. There is also a double homogeneity structure \cite{Grabowski:2011} here described by the two  commuting weight vector fields \\

\begin{tabular}{p{10cm} p{6cm}}
 \begin{eqnarray}
\nonumber \Delta_{1} &=& dx^{A}\frac{\partial}{\partial dx^{A}} + \sum_{r=1}^{k}  d\kAbove{r}{x^{A}}\frac{\partial}{\partial d  \kAbove{r}{x^{A}}}\\
\nonumber \Delta_{2} &=& \sum_{r=1}^{k} r \left( \kAbove{r}{x^{A}} \frac{\partial}{\partial \kAbove{r}{x^{A}}}  + d\kAbove{r}{x^{A}} \frac{\partial}{\partial d\kAbove{r}{x^{A}}}\right)
\end{eqnarray}
&

\begin{diagram}[htriangleheight=15pt ]
  &\mathbb{T}^{(k)}M& \\
\ldTo(1,2) & &  \rdTo(1,2)  \\
 T^{(k)}M &  & \Pi TM\\
&\rdTo(1,2)  \ldTo(1,2)&\\
&M&
\end{diagram}
\end{tabular}
where we have picked natural local coordinates $(x^{A}, dx^{A} , \kAbove{r}{x^{A} }, d\kAbove{r}{x^{A}})$ for $1\leq r \leq k$. Together with the de Rham differential
\begin{equation*}
d = dx^{A}\frac{\partial}{\partial x^{A}} + \sum_{r=1}^{k}  d\kAbove{r}{x^{A}}\frac{\partial}{\partial   \kAbove{r}{x^{A}}},
\end{equation*}
we have the following (super) Lie algebra
\begin{equation}
[d,d] =0, \hspace{10pt} [\Delta_{1}, \Delta_{2}] =0, \hspace{10pt} [\Delta_{1}, d] =d, \hspace{10pt} [\Delta_{2}, d] =0.
\end{equation}
We see that this lie algebra is given by the central extension of the Lie algebra of $\InDiff(\mathbb{R}^{0|1})$, whose infinitesimal action on $\Pi T(T^{(k)}M)$ defines $\Delta_{1}$ and $d$, by the abelian Lie algebra generated by $\Delta_{2}$ that originates from the action of $(\mathbb{R}, \cdot)$ on $T^{(k)}(\Pi TM)$.\\

One can also understand the canonical (integrable) higher almost tangent structure here in a similar way. Specifically the transformation $t' \mapsto t$ and $\tau' \mapsto \tau+ \epsilon \:t$, where $\epsilon$ in an odd parameter and $(t ,\tau)$ are global coordinates on $\mathbb{R}^{1|1}$ gives rise to the homological vector field
\begin{equation*}
J = \sum_{r=0}^{k-1} d\kAbove{r}{x^{A}}\frac{\partial}{\partial   \kAbove{r+1}{x^{\:\:\:\:A}}},
\end{equation*}

which we recognise to be the required higher almost tangent structure. It is then easy to verify that
\begin{equation}
[J,J] = 0, \hspace{15pt} [\Delta_{1},J] = J, \hspace{15pt} [\Delta_{2},J] = -J, \hspace{15pt} [d, J]=0.
\end{equation}

By passing to total weight, which is essential described by the sum of the two weight vector fields $\Delta = \Delta_{1}+ \Delta_{2}$ we arrive at the following Lie algebra
\begin{equation}
[d,d]=0, \hspace{15pt} [\Delta, d] = d, \hspace{15pt} [\Delta, J]=0,\hspace{15pt} [d,J]=0,\hspace{15pt} [J,J]=0,
\end{equation}

and thus we see that this Lie algebra is again a central extension of the Lie algebra of $\InDiff(\mathbb{R}^{0|1})$ but this time by the abelian Lie algebra generated by a single odd element. Note that in general the vanishing of the self-commutator of an odd vector field is a non-trivial condition.

\begin{remark}
To the author's knowledge, interpreting the canonical higher almost tangent structure on a higher order tangent bundle in this way is new and the consequences await to be properly explored.
\end{remark}

\begin{statement}Putting the above observations together, if we restrict attention to finite dimensional supermanifolds and their morphisms, then there is no single privileged  supermanifold that plays the role of the source  of curves in a completely  satisfactory way: one seems rather forced to employ the internal Homs.
\end{statement}

\section*{Acknowledgments}
The author would like to thank J. Grabowski for suggesting that their should be a natural geometric definition of the k-th order tangent bundle of a supermanifold. A special thank you goes to M. Rotkiewicz and M.  J\'{o}\'{z}wikowski. The author would also like to thank the organisers of  the seminar ``Group actions and manifolds" 8-11 November 2013, Stefan Banach International Mathematical Center, Warsaw, Poland where part of this work was announced. The author graciously acknowledges the support of the Warsaw Center of Mathematics and Computer Science.

\null
\vfill
\begin{center}
Andrew James Bruce\\
\small \emph{Institute of Mathematics, Polish Academy of Sciences,}\\ \small \emph{ul. \'{S}niadeckich 8, P.O. Box 21}\\ \small \emph{00-956 Warszawa, Poland},\\ \small email:\texttt{andrewjamesbruce@googlemail.com}
\end{center}

\end{document}